\newcommand{\BB}[1]{\mathds{#1}}
\newcommand{\SR}[1]{\mathscr{#1}}
\newcommand{\PL}{\text{\textsl{\L}}}
\newcommand{\bul}[1]{%
\scalebox{0.7}{\tikz[baseline={([yshift={2.5pt}] current bounding box.south)}]%
  \node [outer sep=0pt, inner sep=2pt, fill=white, draw, circle] {\bfseries#1};%
}}
\theoremstyle{definition}
\newtheorem{theorem}{Theorem}[section]
\newtheorem{lemma}[theorem]{Lemma}
\newtheorem{corollary}[theorem]{Corollary}
\newtheorem{remark}[theorem]{Remark}
\newtheorem{example}[theorem]{Example}
\newtheorem{definition}[theorem]{Definition}
\newtheorem{problem}[theorem]{Problem}
\newtheorem{question}[theorem]{Question}
\newtheorem{conjecture}[theorem]{Conjecture}
\newcommand{\defas}{\coloneqq}
\newcommand{\CIperp}{\mathrel{\text{$\perp\mkern-10mu\perp$}}}
\newcommand{\CI}[2][\CIperp]{{%
  \setsepchar{{=}/{|}/{:}}
  \ignoreemptyitems
  \readlist*\mylist{#2}
  \ifthenelse{\listlen\mylist[] = 2}{\mylist[2]}{}%
  [\mylist[1,1,1] #1 \ifthenelse{\listlen\mylist[1,1] = 2}{\mylist[1,1,2]}{\mylist[1,1,1]}%
  \ifthenelse{\listlen\mylist[1] = 2}{{} \mid \mylist[1,2]}{}]%
}}
\newcommand{\Set}[1]{\{\,#1\,\}}
\newcommand{\SL}[1]{\text{\sffamily\slshape #1\/}}
\newcommand{\ol}[1]{\overline{#1}}
\newcommand{\eps}{\varepsilon}
\begin{document}
\pagestyle{myheadings}

\title{On the Intersection and Composition \\
properties of conditional independence}

\author{Tobias Boege}

\contact{Tobias}{Boege}%
{Department of Mathematics and Statistics,
UiT -- The Arctic University of Norway,
N-9037 Tromsø}%
{post@taboege.de}

\markboth{T.~Boege}{On Intersection and Composition}

\maketitle

\begin{abstract}
Compositional graphoids are fundamental discrete structures which appear
in probabilistic reasoning, particularly in the area of graphical models.
They are semigraphoids which satisfy the Intersection and Composition properties.
These important properties, however, are not enjoyed by general probability distributions.
This paper surveys what is known about them, providing systematic constructions
of examples and counterexamples as well as necessary and sufficient conditions.
Novel sufficient conditions for both properties are derived in the context of
discrete random variables via information-theoretic tools.
\end{abstract}

\keywords{semigraphoid, compositional graphoid, gaussoid, conditional
independence, entropy region, information inequality}

\classification{94A15 (primary); 62R01, 94A17 (secondary)}

\section{Introduction}
\label{sec:Intro}

One of the most fundamental aspects one could aim to understand about
a complex system is its \emph{dependence structure}: Which observables
depend on others? How many degrees of freedom does the vector of
observations have as the system evolves?
Insights about the dependence structure are not strictly required to
tackle more advanced questions about numerical simulation, branching
behavior or long-term prediction of the system but they contribute to
a theoretical understanding and efficient implementation.
The study of dependence (or \emph{special position}) is attractive
also because the concept is ubiquitous in mathematics and computer
science, from dimensions in linear algebra and algebraic geometry
\cite{CoxLittleOShea,Oxley} over dependence in statistics and graphs
\cite{Lauritzen,Studeny} to functional dependence in cryptography
and database theory \cite{InfoCrypto,MaierDB}.
Complex interactions arise in each of these settings and it is far
from trivial to understand the resulting dynamics of dependence,
as evidenced by universality \cite{BokowskiSturmfels} or undecidability
results \cite{KuehneYashfe,Li,Yashfe}.
The investigation of dependence as an abstract relation in various
settings is rewarded by inner-mathematical connections, for example
Matúš's observation \cite[Theorem~2]{MatusAscending} that every universally
valid implication among stochastic independence statements about random
variables also holds for linear independence of vectors (in any vector
space over a field). The dependence theory of random variables turns out
to occupy a sweet spot between usefulness and complication, and it
generalizes reasoning about special position in vector spaces.

The present paper deals with two implications, the \emph{Intersection}
and \emph{Composition} properties, which are important in statistical
modeling but are not universally valid. Their significance lies, for
one, in the applications that they enable. They could be compared to
Desargues's theorem in projective geometry. Desargues does not hold in
every projective geometry but if it does then the geometry is
coordinatized by a skew field~\cite[Chapter~II]{Artin} which enables
the use of algebraic methods.
On the other hand, Intersection and Composition are significant also
due to their closeness to the very basic, universal properties of
independence relations.
The study of such properties in statistics goes back to Dawid~\cite{DawidTheory,DawidOperations}
who advocated the use of \emph{conditional independence~(CI)} as a
foundational concept in statistical inference and described basic
relations among the valid CI~statements for any finite system~$N$
of jointly distributed random variables which became known later
as the \emph{semigraphoid properties}. They consist of the following
assertions and implications, for any four disjoint subsets
$I, J, K, L \subseteq N$; see~\cite[Section~2.2.2]{Studeny}:
\begin{description}[itemsep=0.1em, labelindent=\parindent]
\item[Triviality] $\CI{I:\emptyset|L}$,
\item[Symmetry] $\CI{I:J|L} \iff \CI{J:I|L}$,
\item[Decomposition] $\CI{I:JK|L} \implies \CI{I:J|L}$,
\item[Weak union] $\CI{I:JK|L} \implies \CI{I:K|JL}$,
\item[Contraction] $\CI{I:J|L} \land \CI{I:K|JL} \implies \CI{I:JK|L}$.
\end{description}

The Triviality axiom is inconsequential as it does not interact with the
other axioms in a way that produces other, non-trivial statements.
Throughout this paper, we accept the Symmetry axiom and formally identify
any CI~symbol $\CI{I:J|K}$ with its symmetric version $\CI{J:I|K}$. This
leaves Decomposition, Weak union and Contraction as the defining traits
of a semigraphoid. They can be restated more succinctly as an equivalence:
\begin{equation}
  \label{eq:Semgr}
  \CI{I:JK|L} \iff \CI{I:J|L} \land \CI{I:K|JL}.
\end{equation}
This property can be found in many notions of independence all across
mathematics and computer science. It holds for stochastic independence
as Dawid showed; for linear and algebraic independence and their vast
generalization of forking independence in geometric stability theory~\cite{Palacin}
(where this property is called ``Transitivity'');
and for various notions of separation in graphs \cite{UnifyingMarkov}
(and their appropriate generalizations to topological spaces).

Since the roles of $J$ and $K$ are interchangeable in the left-hand side
of \eqref{eq:Semgr}, we may consider a symmetrized version which is the
starting point for our investigation:
\[
  \CI{I:JK|L} \iff \begin{cases}
    \bul1\, \CI{I:J|L} \land \bul2\, \CI{I:K|JL} \land {} \\
    \bul3\, \CI{I:K|L} \land \bul4\, \CI{I:J|KL}.
  \end{cases}
\]
Under which circumstances are subsets of the statements on the right-hand
side sufficient to imply $\CI{I:JK|L}$ on the left-hand side, provided that
all random variables are discrete?
By Contraction, $\bul1 \land \bul2$ as well as $\bul3 \land \bul4$ are
always sufficient; hence, any 3-subset of $\Set{\bul1, \bul2, \bul3,\bul4}$
is sufficient. Up to interchanging $J$ and $K$, this leaves only three
configurations of the 2-element subsets to consider:
\begin{itemize}[itemsep=0.1em, labelindent=\parindent]
\item The implication $\bul1 \land \bul3 \implies \CI{I:JK|L}$ is the
  converse of (the symmetrized version of) Decomposition, called
  \emph{Composition}.
\item Similarly, $\bul2 \land \bul4 \implies \CI{I:JK|L}$ is the converse
  of (the symmetrized version of) Weak union and is called \emph{Intersection}.
\item Finally, the two symmetric implications $\bul1 \land \bul4 \implies
  \CI{I:JK|L}$ and $\bul2 \land \bul3 \implies \CI{I:JK|L}$ seem to be
  almost entirely disregarded in the literature, to the point where we
  could not find an established name for these implications.
\end{itemize}

The focus of this paper is on sufficient conditions for Intersection
and Composition; the nameless third implication is only briefly discussed
in \Cref{sec:Remarks}.
Unlike the semigraphoid properties, Intersection and Composition are not
universally valid: there exist discrete probability distributions which
satisfy the premises but not the conclusion~$\CI{I:JK|L}$.
Nevertheless, they can be verified for several families of \emph{graphical
models} (see \cite{UnifyingMarkov}) which play a prominent role in
applications.
Intersection classically appears as a technical condition which ensures
the equivalence of different Markov properties of graphical models (see
\cite[Theorem~3.7]{Lauritzen}). It~also guarantees the uniqueness of
Markov boundaries by \cite{PearlPaz} and drives certain identifiability
results described in \cite{Peters}. The Composition property is needed
in the correctness proof of the IAMB algorithm to find Markov boundaries;
cf.~\cite{MarkovBoundary}.
Continuing this line of work, more recent research of Amini, Aragam, and
Zhou~\cite{Bryon} seeks to decouple structure learning algorithms from the
graphical representation and faithfulness assumptions to generalize them
to situations in which only formal properties of the independence model,
such as Intersection and Composition, are assumed. This~has renewed
interest in sufficient conditions under which these properties hold.

The remainder of this paper is organized as follows. \Cref{sec:Reduction}
performs routine manipulations to reduce Intersection and Composition to
a standard form in which they turn out to be logical converses. %
\Cref{sec:Inter,sec:Compo} survey known sufficient conditions for
Intersection and Composition, respectively, discuss some interesting
example classes, and derive a set of new sufficient conditions.
Further remarks are collected in \Cref{sec:Remarks}.

\subsection*{Notational conventions}

Our notation for conditional independence statements largely follows the
standard reference \cite{Studeny}. In particular, $N$ is a finite set
indexing a system of jointly distributed random variables. Subsets of $N$
are usually called $I, J, K, L, \ldots$ and elements $i, j, k, l, \ldots$.
An element $i \in N$ also denotes the singleton subset $\Set{i} \subseteq N$.
Union of subsets of $N$ is abbreviated to $IJ = I \cup J$. A CI~statement
$\CI{I:J|K}$ is read as ``$I$ is independent of $J$ given~$K$''.
In \Cref{sec:Inter,sec:Compo} we work concretely with four discrete random
variables denoted $\SL A, \SL X, \SL Y, \SL G$.
Throughout we employ concepts such as entropy and conditional mutual
information from Shannon theory for which \cite{Yeung} is an~accessible~reference.
The~use of \emph{information diagrams} like the one in \Cref{fig:InfoDiagram}
often elucidates computations with Shannon entropies; see
\cite[Section~6.5]{Yeung} for an explanation of this method.

\begin{figure}
\begin{center}
\scalebox{0.7}{%
\begin{tikzpicture}[scale=0.6]
\node (a) at (0,0) {};
\node (b) at (4,0) {};
\node (c) at (2,3.464) {};
\draw (a) circle (3.7);
\draw (b) circle (3.7);
\draw (c) circle (3.7);

\node[shift={(0,0.9)}] (A|XY) at (c) {\scriptsize$H(\SL A\mid \SL X,\SL Y)$};
\node[shift={(-0.9,-0.4)}] (X|AY) at (a) {\scriptsize$H(\SL X\mid \SL A,\SL Y)$};
\node[shift={(0.9,-0.4)}] (Y|AX)  at (b)  {\scriptsize$H(\SL Y\mid \SL A,\SL X)$};
\node (A:X:Y) at (2,1.1) {\scriptsize$I(\SL A:\SL X:\SL Y)$};
\node (X:Y|A) at (2,-1.2) {\scriptsize$I(\SL X:\SL Y\mid\SL A)$};
\node (A:X|Y) at (-0.1,2.6) {\scriptsize$I(\SL A:\SL X\mid\SL Y)$};
\node (A:Y|X) at (4.1,2.6) {\scriptsize$I(\SL A:\SL Y\mid\SL X)$};

\node (A) at (2,8) {\large$\SL A$};
\node (X) at (-4,-3) {\large$\SL X$};
\node (Y) at (8,-3) {\large$\SL Y$};
\end{tikzpicture}}
\end{center}
\caption{The generic information diagram of three jointly distributed random
variables $\SL A, \SL X, \SL Y$. All of Shannon's information measures can be
expressed as linear combinations of these seven quantities.}
\label{fig:InfoDiagram}
\end{figure}

\section{Preliminary reductions}
\label{sec:Reduction}

It is well-known that any CI~statement $\CI{I:J|K}$ with pairwise disjoint
sets $I, J, K \subseteq N$ is equivalent modulo the semigraphoid axioms to
a conjunction of \emph{elementary} CI~statements:
\begin{equation}
  \label{eq:ElemCI}
  \CI{I:J|K} \;\iff\; \bigwedge_{i \in I} \bigwedge_{j \in J}
    \bigwedge_{K \subseteq L \subseteq IJK \setminus ij} \CI{i:j|L}.
\end{equation}
The proof of this fact merely combines Decomposition and Weak union
(with Symmetry) in one direction and Contraction in the other.
Since the semigraphoid axioms hold for any system of discrete random
variables, we may reformulate Intersection and Composition in terms
of elementary~CI using \eqref{eq:ElemCI} and arrive at the following
equivalent formulations:
\begin{alignat*}{2}
  \text{\bfseries Intersection}\quad && \CI{i:j|kL} \land \CI{i:k|jL} &\implies \CI{i:j|L}  \land \CI{i:k|L}, \\
  \text{\bfseries Composition}\quad  && \CI{i:j|L}  \land \CI{i:k|L}  &\implies \CI{i:j|kL} \land \CI{i:k|jL}.
\end{alignat*}
This is the form in which these properties are often presented in the
literature on gaussoids, such as \cite{LnenickaMatus}. This also shows
that Intersection and Composition are logical converses of each other
modulo the semigraphoid properties.

The final reduction concerns the conditioning set $L$ which is common
to all statements in the above CI~implication formulas. The ``full''
Intersection and Composition properties demand the above CI~implications
to hold for each choice of distinct $i, j, k \in N$ and $L \subseteq
N \setminus ijk$. Each quadruple $(i,j,k,L)$ encodes an \emph{instance}
of the property. In a given instance $(i,j,k,L)$, we may marginalize the
distribution to $ijkL$ and condition on~$L$. Thus, we arrive at the
following problem formulation which is addressed in this paper.

\begin{problem} \label{prob:Main}
For jointly distributed discrete random variables $(\SL A, \SL X, \SL Y)$,
find~sufficient conditions such that
\begin{alignat*}{2}
  \text{\bfseries Intersection}\quad && \CI{\SL A:\SL X|\SL Y} \land \CI{\SL A:\SL Y|\SL X} &\implies \CI{\SL A:\SL X|}  \land \CI{\SL A:\SL Y|}, \text{ respectively,}\\
  \text{\bfseries Composition}\quad  && \CI{\SL A:\SL X|}     \land \CI{\SL A:\SL Y|}     &\implies \CI{\SL A:\SL X|\SL Y} \land \CI{\SL A:\SL Y|\SL X}.
\end{alignat*}
\end{problem}

If $T(\SL A,\SL X,\SL Y)$ is a sufficient condition for Intersection or,
respectively, Composition to hold in a trivariate discrete distribution,
then a sufficient condition for the full Intersection or Composition
property is obtained as a conjunction of $T(i,j,k \mid L=\omega)$ over
all quadruples~$(i,j,k,L)$ and all events $\omega$ of $L$.

\begin{figure}
\begin{center}
\scalebox{0.7}{%
\begin{tikzpicture}[scale=0.6]
\node (a) at (0,0) {};
\node (b) at (4,0) {};
\node (c) at (2,3.464) {};
\draw (a) circle (3.7);
\draw (b) circle (3.7);
\draw (c) circle (3.7);

\node[shift={(0,0.9)}] (A|XY) at (c) {\Large$*$};
\node[shift={(-0.9,-0.4)}] (X|AY) at (a) {\Large$*$};
\node[shift={(0.9,-0.4)}] (Y|AX)  at (b)  {\Large$*$};
\node (A:X:Y) at (2,1.2) {\Large$h$};
\node (X:Y|A) at (2,-1.4) {\Large$g$};
\node (A:X|Y) at (-0.1,2.6) {\Large$0$};
\node (A:Y|X) at (4.1,2.6) {\Large$0$};

\node (A) at (2,8) {\large$\SL A$};
\node (X) at (-4,-3) {\large$\SL X$};
\node (Y) at (8,-3) {\large$\SL Y$};
\node (Inter) at (2,-5) {\Large\strut Premises of Intersection};
\end{tikzpicture}}
\quad
\scalebox{0.7}{%
\begin{tikzpicture}[scale=0.6]
\node (a) at (0,0) {};
\node (b) at (4,0) {};
\node (c) at (2,3.464) {};
\draw (a) circle (3.7);
\draw (b) circle (3.7);
\draw (c) circle (3.7);

\node[shift={(0,0.9)}] (A|XY) at (c) {\Large$*$};
\node[shift={(-0.9,-0.4)}] (X|AY) at (a) {\Large$*$};
\node[shift={(0.9,-0.4)}] (Y|AX)  at (b)  {\Large$*$};
\node (A:X:Y) at (1.85,1.2) {\Large$-f$};
\node (X:Y|A) at (2,-1.4) {\Large$g$};
\node (A:X|Y) at (-0.1,2.6) {\Large$f$};
\node (A:Y|X) at (4.1,2.6) {\Large$f$};

\node (A) at (2,8) {\large$\SL A$};
\node (X) at (-4,-3) {\large$\SL X$};
\node (Y) at (8,-3) {\large$\SL Y$};
\node (Compo) at (2,-5) {\Large\strut Premises of Composition};
\end{tikzpicture}}
\end{center}
\caption{Information diagrams assuming the premises of Intersection
and Composition, respectively. Both diagrams feature two non-negative
parameters: on the left side $g = I(\SL X:\SL Y\mid \SL A)$ and
$h = I(\SL A:\SL X:\SL Y)$; on the right side $g$ and
$f = I(\SL A:\SL X\mid \SL Y) = I(\SL A:\SL Y\mid \SL X)
= -I(\SL A:\SL X:\SL Y)$.
}
\label{fig:Assump}
\end{figure}

\Cref{fig:Assump} shows the premises of Intersection and Composition in
information diagrams. The conclusion $I(\SL A:\SL X, \SL Y)$ is equivalent
to the vanishing of the non-negative quantities~$h$ (for Intersection),
respectively~$f$ (for Composition). Note that in the case of Composition,
Shannon inequalities mandate that $g \ge f$ but no such inequality is
implied under the premises of Intersection.
We seek general conditions which guarantee $h = 0$, respectively $f = 0$,
when the respective premises hold but are not too severe when the premises
do not hold (in~which case the property holds vacuously).

\section{The Intersection property}
\label{sec:Inter}

The problem of finding sufficient conditions for the Intersection property
has received considerable attention from a variety of research communities.
The most widely known and the simplest general condition on a distribution
which ensures the full Intersection property is that the probability density
be strictly positive. This is sufficient but not necessary and, depending
on the application, may be too restrictive.

\subsection{Examples and non-examples}

The keyword \emph{graphoid} is helpful in locating examples of Intersection
in the literature: it~means a semigraphoid which satisfies Intersection.
The canonical examples are various Markov properties of graphical models,
including Bayesian networks, Markov networks \cite{UnifyingMarkov} as well
as $\ast$- and $C^\ast$-separation \cite{MaxLinearCI,Maxoids}.

\begin{example} \label{ex:Inter:Gaussoids}
A \emph{gaussoid} is a semigraphoid satisfying Intersection, Composition
and a further property called \emph{Weak transitivity}. This notion was
introduced by Lněnička and Matúš~\cite{LnenickaMatus} to model the
conditional independence structure of regular Gaussian random variables.
Chen~\cite[Section~2.2]{Xiangying} records that gaussoids furthermore
appear as the vanishing almost-principal quasi-minors of a polarity in a
Desarguesian projective~geometry, and as abstract orthogonality relations
on atoms in modular lattices. The interested reader is referred to
\cite{Xiangying} for detailed definitions and proofs.
\end{example}

We now focus on examples of the failure of Intersection.

\begin{example}[Three binary random variables] \label{ex:Inter:Binary}
The joint distribution of three binary random variables is given by eight
non-negative real numbers $p_{000}, p_{001}, \ldots, p_{111}$ which are
indexed by triples over the set $\Set{0,1}$ and sum to one. The set of all
such distributions is known as the probability simplex $\Delta(2,2,2)$.
A generic choice of these values leads to a distribution which does not
satisfy any CI~statement and therefore satisfies Intersection vacuously.
To violate Intersection, at least its premises must be fulfilled.
The set of such distributions is the intersection of $\Delta(2,2,2)$ with
an algebraic variety $V$ and its structure can be examined using primary
decomposition in \texttt{Macaulay2} (\cite{M2}) as described in \cite{GMAlgebra}.
\begin{Verbatim}[fontsize=\footnotesize, frame=leftline]
needsPackage "GraphicalModels";

R = markovRing(2,2,2);
I = conditionalIndependenceIdeal(R, {{{1},{2},{3}}, {{1},{3},{2}}});
J = conditionalIndependenceIdeal(R, {{{1},{2,3},{}}});
decompose(I:J)
\end{Verbatim}
The above decomposition describes the two irreducible components of $V$ in
$\Delta(2,2,2)$ on which there are distributions which violate Intersection.
They are given by the conditions
\begin{gather}
  \label{eq:Inter:FD:1}
  p_{000} = p_{011} = p_{100} = p_{111} = 0, \text{ or} \\
  \label{eq:Inter:FD:2}
  p_{001} = p_{010} = p_{101} = p_{110} = 0. \hphantom{\text{ or}}
\end{gather}
As expected, violations of Intersection can only occur on the boundary of
$\Delta(2,2,2)$ where the probability mass function has zeros and not all
of the eight joint events are possible.
Choosing for instance the zero pattern \eqref{eq:Inter:FD:2} leaves four
non-negative parameters $p_{000}, p_{011}, p_{100}, p_{111}$ which must sum
to~one. Choosing generic values for these probabilities yields a 3-parameter
family of distributions which satisfy the premises but not the conclusion
of Intersection; a different 3-parameter family is obtained analogously from~\eqref{eq:Inter:FD:1}.
The stipulation of ``generic values'' here is necessary: for special choices
of the remaining four probabilities (such as all of them equal to $\sfrac14$,
as in \Cref{ex:Inter:NonGK}) it happens that Intersection does hold. The
primary decomposition guarantees that these special distributions are confined
to a set of dimension at most two. Hence almost all (with respect to the
Lebesgue measure) distributions in our 3-parameter family violate Intersection.
\end{example}

\begin{example}[Functional dependencies] \label{ex:Inter:FD}
The random variable $\SL A$ depends functionally on $\SL X$ if the conditional
entropy $H(\SL A \mid \SL X)$ vanishes. This is equivalent to the existence of
a deterministic function~$f$ such that $\Pr[\SL A = f(\SL X)] = 1$, i.e.,
the value of $\SL X$ determines the outcome of $\SL A$ almost surely.
In this case (and if $\SL A$ is non-constant overall), the joint distribution
cannot be strictly positive.
Functional dependencies occur frequently in the context of relational databases
and may present themselves in measurements of physical quantities because of
the laws of nature.
If $\SL A$ functionally depends on $\SL Y$ and, separately, also functionally
depends on $\SL X$, then the conditional independencies $\CI{\SL A:\SL X|\SL Y}$
and $\CI{\SL A:\SL Y|\SL X}$ hold. It also follows that $A$ is a function of
the tuple $(\SL X, \SL Y)$. It is then an exercise in the use of the information
diagram method (see \Cref{fig:InfoDiagram}) that $I(\SL A : \SL X : \SL Y) =
H(\SL A)$. Thus, if $\SL A$ is also non-constant (hence has positive Shannon
entropy $H(\SL A)$), then the conclusion of Intersection is not satisfied.
\end{example}

\begin{remark}
Note that the conditions \eqref{eq:Inter:FD:1} and \eqref{eq:Inter:FD:2} in
\Cref{ex:Inter:Binary} enforce in both cases that $\SL X$ is a function of
$\SL Y$ and vice versa. %
It is possible to violate Intersection without any functional dependencies
in the distribution, but this requires larger state spaces.
\end{remark}

\begin{example}[Co-simple matroids] \label{ex:Inter:Matroids}
Let $(N, r)$ be a co-simple matroid; cf.~\cite{Oxley}. Then $r(N) =
r(N \setminus i) = r(N \setminus ij)$ for any $i, j \in N$. This implies
that $\CI{i:j|N \setminus ij}$ for all distinct $i, j \in N$. If the
full Intersection property holds for $r$, then for any $k \in N \setminus ij$
we may use $\CI{i:j|N\setminus ij} \land \CI{i:k|N\setminus ik}
\implies \CI{i:j|N\setminus ijk}$. Used inductively, this argument yields
that $r$ is totally independent, i.e., it satisfies $\CI{i:j|K}$ for all
distinct $i, j \in N$ and all $K \subseteq N \setminus ij$. Since every
$i \in N$ is simultaneously independent of and functionally dependent
on $N \setminus i$, we conclude that $r$ is the zero matroid.
\end{example}

\subsection{The Gács--Körner criterion}

The positivity of the entire distribution guarantees Intersection but is
unnecessarily restrictive. A more refined support condition has been
developed independently by groups of statisticians, information theorists
and algebraists. It is based on the following~concept.

\begin{definition}
Let $\SL X$ and $\SL Y$ be jointly distributed discrete random variables
with state spaces $Q_{\SL X}$ and $Q_{\SL Y}$, respectively. Their \emph{%
characteristic bipartite graph} $G(\SL X, \SL Y)$ is the bipartite graph
on $Q_{\SL X} \sqcup Q_{\SL Y}$ with an edge between events $x$ and $y$
if and only if $\Pr[\SL X=x, \SL Y=y] > 0$.
\end{definition}

\begin{figure}
\begin{center}
\scalebox{0.7}{%
\begin{tikzpicture}[scale=0.6]
\node (a) at (0,0) {};
\node (b) at (4,0) {};
\node (c) at (2,3.464) {};
\draw (a) circle (3.7);
\draw (b) circle (3.7);
\draw (c) circle (3.7);

\node[shift={(0,0.9)}] (A|XY) at (c) {\Large$0$};
\node[shift={(-0.9,-0.4)}] (X|AY) at (a) {\Large$*$};
\node[shift={(0.9,-0.4)}] (Y|AX)  at (b)  {\Large$*$};
\node (A:X:Y) at (2,1.1) {\Large$h$};
\node (X:Y|A) at (2,-1.4) {\Large$g$};
\node (A:X|Y) at (-0.1,2.6) {\Large$0$};
\node (A:Y|X) at (4.1,2.6) {\Large$0$};

\node (A) at (2,8) {\large$\SL G$};
\node (X) at (-4,-3) {\large$\SL X$};
\node (Y) at (8,-3) {\large$\SL Y$};
\end{tikzpicture}}
\end{center}
\caption{Information diagram of the Gács--Körner problem.}
\label{fig:GK}
\end{figure}

This graph appears in the work of Gács and Körner \cite{GacsKoerner} on
common information where it is used to construct a random variable
$\SL{GK}(\SL X, \SL Y)$ which solves the following optimization problem
aimed at extracting the maximum entropy of a random variable which is
simultaneously a function of $\SL X$ and of $\SL Y$:
\begin{equation}
  \label{eq:GK}
  \begin{aligned}
    \max \;& H(\SL G) \\
    \text{s.t.} \;& H(\SL G\mid \SL X) = H(\SL G\mid \SL Y) = 0.
  \end{aligned}
\end{equation}
As seen in \Cref{ex:Inter:FD} the functional dependence assumptions
imply $\CI{\SL G:\SL X|\SL Y}$ and $\CI{\SL G:\SL Y|\SL X}$. Recall the
decomposition of the mutual information $I(\SL X:\SL Y) =
{I(\SL X:\SL Y\mid \SL G)} + I(\SL G:\SL X:\SL Y)$. In \Cref{fig:GK} the
two summands on the right-hand side are denoted $g$ and $h$, respectively.
The diagram shows that $h = H(\SL G)$ is non-negative and~$g$, being a
conditional mutual information, is non-negative as well.
In the Gács--Körner problem $I(\SL X:\SL Y)$ is fixed and the objective
is to find a random variable $\SL G$ which maximizes the~$h$~part in this
decomposition.

The optimal value is known as the \emph{Gács--Körner common information}.
The solution $\SL{GK}(\SL X, \SL Y)$ has as its events the connected
components of $G(\SL X, \SL Y)$ and is specified as a function of $(\SL X, \SL Y)$
to evaluate to the connected component in which the outcomes of $\SL X$
and $\SL Y$ both lie. Since by construction $\Pr[\SL X=x, \SL Y=y] > 0$
if and only if $x$ and $y$ lie in the same connected component,
$G(\SL X, \SL Y)$ is well-defined and satisfies the functional dependence
constraints in \eqref{eq:GK}.
In our context, its significance lies in the following fact:

\begin{theorem} \label{thm:Inter:GK}
If $\CI{\SL A:\SL X|\SL Y}$ and $\CI{\SL A:\SL Y|\SL X}$, then
$\CI{\SL A:\SL X, \SL Y|\SL{GK}(\SL X,\SL Y)}$.
\end{theorem}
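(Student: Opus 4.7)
The plan is to work directly with conditional probability mass functions and exploit the path-connectedness of the components of the characteristic graph. Set $\SL G \defas \SL{GK}(\SL X,\SL Y)$; by construction $\SL G$ is simultaneously a function of $\SL X$ and of $\SL Y$, and the event $\Set{\SL G = C}$ coincides with both ``$\SL X$ lies in the $\SL X$-part of the connected component $C$'' and ``$\SL Y$ lies in its $\SL Y$-part''.

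First I would translate the two premises into pointwise identities between conditional probabilities. The hypothesis $\CI{\SL A:\SL X|\SL Y}$ means $\Pr[\SL A = a \mid \SL X = x, \SL Y = y] = \Pr[\SL A = a \mid \SL Y = y]$ for every $(a,x,y)$ with $\Pr[\SL X=x,\SL Y=y] > 0$, and symmetrically $\CI{\SL A:\SL Y|\SL X}$ yields the same quantity equal to $\Pr[\SL A = a \mid \SL X = x]$. Combining them gives the key identity: whenever $xy$ is an edge of $G(\SL X,\SL Y)$, the two conditional distributions $\Pr[\SL A = \cdot \mid \SL X = x]$ and $\Pr[\SL A = \cdot \mid \SL Y = y]$ coincide on every outcome $a$.

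Next I would propagate this equality along paths. Because each connected component $C$ is path-connected in $G(\SL X,\SL Y)$, iterating the edge identity shows that $x \mapsto \Pr[\SL A = a \mid \SL X = x]$ is constant on the $\SL X$-part of $C$, and symmetrically for $y$, with a single common value $\pi_C(a)$. Since $\Set{\SL G = C}$ decomposes as the disjoint union of the events $\Set{\SL X = x}$ over $\SL X$-vertices $x$ of $C$, averaging these identical conditional distributions yields $\pi_C(a) = \Pr[\SL A = a \mid \SL G = C]$. Substituting back at each support point $(x,y)$ gives $\Pr[\SL A = a \mid \SL X = x, \SL Y = y] = \Pr[\SL A = a \mid \SL G]$, which is exactly the required conditional independence $\CI{\SL A : \SL X, \SL Y | \SL G}$.

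The main obstacle I anticipate is bookkeeping around zero-probability events: every conditional probability in the chain must be defined on a positive-probability conditioning event, and the propagation of equalities along a path in $G(\SL X,\SL Y)$ should only use edges of positive joint probability. Both requirements are built into the definition of the characteristic bipartite graph, but they deserve explicit verification so that no step silently divides by zero.
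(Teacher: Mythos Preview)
Your argument is correct and is essentially the standard elementary proof of the double Markov property: equate $\Pr[\SL A=a\mid\SL X=x]$ and $\Pr[\SL A=a\mid\SL Y=y]$ along edges of $G(\SL X,\SL Y)$, propagate along paths to obtain a single value $\pi_C(a)$ on each component, and identify $\pi_C(a)$ with $\Pr[\SL A=a\mid\SL G=C]$ by averaging. Your caveat about zero-probability events is well taken and is indeed handled by the definition of $G(\SL X,\SL Y)$; you might also remark that states $x$ with $\Pr[\SL X=x]=0$ are isolated vertices whose components have probability zero and can be ignored.

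There is no proof in the paper to compare against: the paper states \Cref{thm:Inter:GK} without proof and attributes it to Exercise~16.25 in Csisz\'ar and K\"orner~\cite{CsiszarKoerner}. Your write-up supplies precisely the argument that exercise is asking for, so it is a welcome addition rather than a deviation.
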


In information theory, this is sometimes called the \emph{double Markov
property} after Exercise 16.25 in the book of Csiszár and Körner~\cite{CsiszarKoerner}.

\begin{corollary}[Gács--Körner criterion] \label{cor:Inter:GK}
If $G(\SL X, \SL Y)$ is connected, then $\CI{\SL A:\SL X|\SL Y} \land
\CI{\SL A:\SL Y|\SL X} \implies \CI{\SL A:\SL X,\SL Y}$.
\end{corollary}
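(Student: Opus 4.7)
The plan is to apply \Cref{thm:Inter:GK} and then observe that connectedness of $G(\SL X,\SL Y)$ trivializes the Gács--Körner extension. Under the hypotheses $\CI{\SL A:\SL X|\SL Y}$ and $\CI{\SL A:\SL Y|\SL X}$, the theorem immediately yields $\CI{\SL A:\SL X,\SL Y|\SL{GK}(\SL X,\SL Y)}$, so all that remains is to argue that this statement coincides with $\CI{\SL A:\SL X,\SL Y}$.

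By construction, the events of $\SL{GK}(\SL X,\SL Y)$ are in bijection with the connected components of $G(\SL X,\SL Y)$. If $G(\SL X,\SL Y)$ has a single component, then $\SL{GK}(\SL X,\SL Y)$ takes only one value almost surely and therefore has zero entropy. A zero-entropy conditioning variable that is also a function of the remaining variables can be dropped from any CI statement; concretely, from the two-way chain-rule expansion of $I(\SL A:\SL X,\SL Y,\SL{GK}(\SL X,\SL Y))$ one obtains $I(\SL A:\SL X,\SL Y|\SL{GK}(\SL X,\SL Y)) = I(\SL A:\SL X,\SL Y) + I(\SL A:\SL{GK}(\SL X,\SL Y)|\SL X,\SL Y) - I(\SL A:\SL{GK}(\SL X,\SL Y))$, whose last two terms vanish — the first because $\SL{GK}(\SL X,\SL Y)$ is a function of $(\SL X,\SL Y)$, and the second because $\SL{GK}(\SL X,\SL Y)$ is almost surely constant.

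I do not foresee any genuine obstacle: the corollary is essentially a one-line consequence of \Cref{thm:Inter:GK} combined with the definition of $\SL{GK}(\SL X,\SL Y)$. The only routine step is the folklore observation that conditioning on a constant does nothing, for which the chain-rule identity above suffices; no further appeal to the premises or to information inequalities is required.
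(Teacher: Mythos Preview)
Your proposal is correct and follows essentially the same approach as the paper: apply \Cref{thm:Inter:GK} to obtain $\CI{\SL A:\SL X,\SL Y|\SL{GK}(\SL X,\SL Y)}$, then use connectedness of $G(\SL X,\SL Y)$ to conclude that $\SL{GK}(\SL X,\SL Y)$ is constant and may be dropped from the conditioning set. Your chain-rule justification for the last step is a correct and slightly more explicit version of what the paper states in one line.
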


\begin{proof}
If the premises of Intersection hold, then \Cref{thm:Inter:GK} yields
$\CI{\SL A:\SL X,\SL Y|\SL{GK}(\SL X, \SL Y)}$. Since $G(\SL X, \SL Y)$
is connected, the random variable $\SL{GK}(\SL X, \SL Y)$ is a constant
and the conditional independence simplifies to $\CI{\SL A:\SL X, \SL Y}$,
the desired conclusion.
\end{proof}

\begin{remark}
The Gács--Körner condition depends only on the marginal distribution
of $(\SL X, \SL Y)$ and guarantees Intersection with respect to \emph{any}
discrete random variable~$\SL A$. By~requiring the maximum entropy
random variable $\SL G$ which is separately a function of $\SL X$ and of
$\SL Y$ to be constant, it is designed to forbid the class of counterexamples
to Intersection which was discussed in \Cref{ex:Inter:FD}. However, thanks
to \Cref{thm:Inter:GK} it is even sufficient for Intersection.
\end{remark}

This sufficient condition for one instance of Intersection indirectly
also targets the support of the distribution but instead of requiring
positivity everywhere, it only requires enough positivity on the marginal
distribution of $(\SL X, \SL Y)$ to make their characteristic bipartite
graph connected. Even when this property is required for all pairs of
random variables in a larger random vector, the resulting assumptions
are weaker than strictly positive support.
An equivalent condition in terms of $\sigma$-algebras is already present
in Dawid's foundational paper \cite{DawidOperations} and features in other
works under the name \emph{measurable separability}. San Martín, Mouchart
and Rolin \cite{Ignorable} provide an overview of the history of this idea
on the statistics side.

In algebraic statistics, a similar result is known as the Cartwright--Engström
conjecture which was recorded in \cite{AlgStat} and resolved by Fink in~\cite{Fink}.
Fink's theorem explains the combinatorial structure of the irreducible
decomposition of the binomial ideal corresponding to the premises of
Intersection. The irreducible components appearing in this decomposition
are associated to the \emph{admissible} bipartite graphs on $Q_{\SL X} \sqcup Q_{\SL Y}$,
i.e., bipartite graphs whose connected components are isomorphic to complete
bipartite graphs. Under this correspondence, a generic distribution of
$(\SL A, \SL X, \SL Y)$ in the component associated to an admissible
bipartite graph $G$ will have $G(\SL X, \SL Y) = G$. The only admissible
graph which is connected is the complete bipartite graph on $Q_{\SL X}
\sqcup Q_{\SL Y}$. A generic distribution in its component satisfies the
assumption of \Cref{cor:Inter:GK} and thus the Intersection property.
Generic distributions in all other components violate Intersection,
generalizing the computational results observed in \Cref{ex:Inter:Binary}.

\begin{example}[Incompleteness of the Gács--Körner criterion] \label{ex:Inter:NonGK}
The following table defines a joint distribution of four binary random
variables in which $\SL G$ is the Gács--Körner common information of
$\SL X$ and $\SL Y$. Since $\SL G$ is non-constant, the criterion of
\Cref{cor:Inter:GK} does not apply. Nevertheless, the distribution
satisfies $\CI{\SL A:\SL X, \SL Y}$ and therefore Intersection.
\begin{center}
\setlength{\tabcolsep}{1.5em}
\renewcommand{\arraystretch}{1.1}
\begin{tabular}{c|c|c|c||c}
$\SL A$ & $\SL X$ & $\SL Y$ & $\SL G$ & $\Pr$ \\ \hline
  $0$   &   $0$   &   $0$   &   $0$   & $\sfrac{1}{4}$ \\
  $0$   &   $1$   &   $1$   &   $1$   & $\sfrac{1}{4}$ \\
  $1$   &   $0$   &   $0$   &   $0$   & $\sfrac{1}{4}$ \\
  $1$   &   $1$   &   $1$   &   $1$   & $\sfrac{1}{4}$
\end{tabular}
\end{center}
\end{example}

\subsection{The conditional Ingleton criterion}
\label{sec:Inter:CondIng}

This section derives a set of new sufficient conditions for Intersection.
Like the Gács--Körner criterion above, they are formulated \emph{synthetically},
i.e., in terms of an auxiliary random variable~$\SL G$ which satisfies
additional CI~constraints with respect to $\SL A, \SL X, \SL Y$.
In this situation, the random variables are subject to powerful
information-theoretic inequalities. %
We~take advantage of recent work of Studený~\cite{StudenyIngleton}
which elucidates the connections between CI~implications on four
discrete random variables %
and special information-theoretic constraints known as
\emph{conditional Ingleton inequalities}.

We illustrate this approach with a concrete example. Let $\SL A, \SL X, \SL Y, \SL G$
be jointly distributed. The \emph{Ingleton expression} with respect to $\SL A$ and $\SL G$ is
\begin{gather}
  \label{eq:Ingleton}
  \begin{aligned}
  \square(\SL A, \SL G) &\defas H(\SL X, \SL G) + H(\SL Y, \SL G) + H(\SL X, \SL A) + H(\SL Y, \SL A) + H(\SL X, \SL Y) - {} \\
  &\hphantom{{}\defas{}} H(\SL A, \SL G) - H(\SL X) - H(\SL Y) - H(\SL X, \SL Y, \SL A) - H(\SL X, \SL Y, \SL G).
  \end{aligned}
\end{gather}
A celebrated result of Ingleton \cite{Ingleton} states that if $\SL A, \SL X, \SL Y, \SL G$
are subspaces of a finite-dimensional vector space and, instead of entropy,
the function $H$ in \eqref{eq:Ingleton} is interpreted as the dimension of
the span of its arguments, then $\square(\SL A, \SL G) \ge 0$. This inequality
does not hold in general for random variables but it becomes valid when certain
further assumptions are imposed. These assumptions give rise to what is called
a \emph{conditional Ingleton inequality}. For instance if the premises of
Intersection $\CI{\SL A:\SL X|\SL Y}$ and $\CI{\SL A:\SL Y|\SL X}$ hold then
\cite[Theorem~1 (3cI)]{StudenyIngleton} guarantees that $\square(\SL A, \SL G) \ge 0$.
One can verify the following equality by replacing each (conditional) mutual
information on the right-hand side by its expression in terms of joint entropies:
\[
  \square(\SL A, \SL G) = I(\SL X:\SL Y\mid \SL G) + I(\SL A:\SL X\mid \SL Y) + I(\SL A:\SL G) - I(\SL A:\SL X).
\]
Combining the CI~assumptions and the resulting non-negativity of $\square(\SL A, \SL G)$
yields
\[
  0 \le I(\SL A:\SL X) \le I(\SL X:\SL Y\mid \SL G) + I(\SL A:\SL G).
\]
Thus if additionally $\CI{\SL X:\SL Y|\SL G}$ and $\CI{\SL A:\SL G}$ hold,
then $\CI{\SL A:\SL X}$ is forced which implies $\CI{\SL A:\SL X, \SL Y}$
by the semigraphoid properties. This proves a CI~implication:
\[
  \CI{\SL A:\SL X|\SL Y} \land \CI{\SL A:\SL Y|\SL X} \land \CI{\SL X:\SL Y|\SL G} \land \CI{\SL A:\SL G} \implies \CI{\SL A:\SL X,\SL Y}
\]
which is valid for any four jointly distributed discrete random variables.
It involves the premises of Intersection and implies the conclusion.
This implication is recorded as rule (I:4) in \cite{StudenyIngleton} and,
following the same template, Studený derived many more. Our new sufficient
conditions for Intersection are based on these implications.

\begin{theorem}[Conditional Ingleton criterion] \label{thm:Inter:CondIng}
Let $\SL A, \SL X, \SL Y$ be jointly distributed discrete random
variables. Suppose that there exists a discrete $\SL G$ jointly distributed
with $\SL A, \SL X, \SL Y$ satisfying any of the following conditions:
\begin{enumerate}[label=(\roman*), itemsep=0.2em, leftmargin=4em]
\item\label{thm:Inter:CondIng:X}$\CI{\SL A:\SL G}$ and $\CI{\SL X:\SL Y|\SL G}$,
\item\label{thm:Inter:CondIng:Y}$\CI{\SL X:\SL G}$ and $\CI{\SL A:\SL Y|\SL G}$, or
\item\label{thm:Inter:CondIng:Z}$\CI{\SL Y:\SL G}$ and $\CI{\SL A:\SL X|\SL G}$.
\end{enumerate}
Then $\CI{\SL A:\SL X|\SL Y} \land \CI{\SL A: \SL Y|\SL X}
\implies \CI{\SL A:\SL X, \SL Y}$ holds.
\end{theorem}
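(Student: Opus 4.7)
The plan is, for each of the three configurations~(i)--(iii), to apply a suitable conditional Ingleton inequality from Studený~\cite{StudenyIngleton}, chosen so that its right-hand side collapses under the assumed CI constraints on $\SL G$ together with the premises of Intersection. By Contraction, the goal $\CI{\SL A:\SL X,\SL Y}$ under the Intersection premises reduces to deriving either $\CI{\SL A:\SL X}$ (combine with $\CI{\SL A:\SL Y|\SL X}$) or $\CI{\SL A:\SL Y}$ (combine with $\CI{\SL A:\SL X|\SL Y}$), so it suffices to extract one of these.

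For case~(i), apply the Ingleton inequality
\[
  I(\SL X:\SL Y) \le I(\SL X:\SL Y|\SL A) + I(\SL X:\SL Y|\SL G) + I(\SL A:\SL G),
\]
valid under $\CI{\SL A:\SL G}$ and $\CI{\SL X:\SL Y|\SL G}$. Substituting these assumptions of~(i) into the right-hand side leaves $I(\SL X:\SL Y|\SL A)$. The chain-rule identity $I(\SL X:\SL Y) = I(\SL A:\SL X) + I(\SL X:\SL Y|\SL A)$, which holds under the premise $\CI{\SL A:\SL X|\SL Y}$, compared against this Ingleton bound forces $I(\SL A:\SL X) \le 0$, hence $\CI{\SL A:\SL X}$.

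Cases~(ii) and~(iii) are exchanged by $\SL X \leftrightarrow \SL Y$, a swap which preserves the Intersection premises; it therefore suffices to handle~(ii). Apply Ingleton in the form
\[
  I(\SL A:\SL Y) \le I(\SL A:\SL Y|\SL X) + I(\SL A:\SL Y|\SL G) + I(\SL X:\SL G),
\]
valid under $\CI{\SL X:\SL G}$ and $\CI{\SL A:\SL Y|\SL G}$. All three right-hand summands vanish---the first by the Intersection premise $\CI{\SL A:\SL Y|\SL X}$, the other two by~(ii)---so $\CI{\SL A:\SL Y}$ follows directly.

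The main obstacle is invoking the correct conditional Ingleton inequality in each case. Ingleton is not Shannon-type in general, so the three configurations~(i)--(iii) must be identified as precisely those under which Ingleton becomes provable from Shannon information inequalities together with the hypothesized CI assumptions on $\SL G$; matching each case to the appropriate statement in~\cite{StudenyIngleton} is what makes the argument go through.
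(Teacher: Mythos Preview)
Your proposal is correct and follows essentially the same approach as the paper: both invoke conditional Ingleton inequalities from Studený's catalog~\cite{StudenyIngleton}, with the paper simply citing rule~(I:2) for cases~(ii)/(iii) and rule~(I:4) for case~(i), while you unpack the mechanism by writing out the relevant Ingleton instance and showing how each assumed CI constraint kills a term. Your reduction of cases~(ii) and~(iii) to one another via the swap $\SL X \leftrightarrow \SL Y$ is exactly what the paper does. The only thing to watch is that the precise pair of CI conditions you declare sufficient for each Ingleton instance must indeed match an entry in Studený's list; you acknowledge this in your final paragraph, and the paper's citations confirm that the required conditional Ingleton inequalities are available.
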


\begin{proof}
Given $\CI{\SL A:\SL X|\SL Y}$ and $\CI{\SL A:\SL Y|\SL X}$, the
conditions~\ref{thm:Inter:CondIng:Y} and~\ref{thm:Inter:CondIng:Z}
are symmetric with respect to exchanging $\SL X$ and $\SL Y$ and
both follow from rule~(I:2) in \cite{StudenyIngleton}.
Condition~\ref{thm:Inter:CondIng:X} is covered by rule (I:4).
\end{proof}

In order to compare \Cref{thm:Inter:GK,thm:Inter:CondIng}, assume that $\SL G$
is a function of $\SL X$ and of $\SL Y$. In condition~\ref{thm:Inter:CondIng:Y},
the independence assumption $\CI{\SL X:\SL G}$ then implies that $\SL G$ is
constant and hence the further assumption $\CI{\SL A:\SL Y|\SL G}$ simplifies
to the desired conclusion already; a similar argument applies to condition~\ref{thm:Inter:CondIng:Z}.
Regarding condition~\ref{thm:Inter:CondIng:X}, the assumption $\CI{\SL X:\SL Y|\SL G}$
is equivalent to $H(\SL G) = I(\SL X:\SL Y)$. This is highly unusual when $\SL G$
is a function of $\SL X$ and of~$\SL Y$. Indeed, the Gács--Körner theorem
(see, e.g., \cite{Laszlo}) asserts that this can only happen if the probability
table of $(\SL X, \SL Y)$ can be brought into block-diagonal form by permutations
of its rows and columns and each block has rank~one.
It appears that the two criteria in \Cref{thm:Inter:GK,thm:Inter:CondIng} are
complementary and neither implies the other.
Note that the distribution given in \Cref{ex:Inter:NonGK} does not satisfy the
Gács--Körner criterion but does satisfy the conditional Ingleton criterion
\Cref{thm:Inter:CondIng}~\ref{thm:Inter:CondIng:X} since $\SL X = \SL Y = \SL G$
are functionally equivalent and the marginal $(\SL A, \SL G)$ is uniform.

\section{The Composition property}
\label{sec:Compo}

The previous section showed that the Intersection property is well-studied.
By comparison, not much is known about the failure modes of Composition.
We again begin by examining examples and show how some approaches that were
successful for Intersection cannot succeed for Composition.

\subsection{Examples and non-examples}

To start, Gaussian distributions satisfy Composition by~\cite[Corollary~2.4]{Studeny}.
This is true even when the covariance matrix is singular (in which case
the distribution need not satisfy Intersection). The gaussoids discussed
in \Cref{ex:Inter:Gaussoids} satisfy Composition as well. All types of
graphical models which can be faithfully represented by Gaussians thus
inherit the Composition property; some of them are also faithfully
representable by discrete random vectors.

\begin{example} \label{ex:Compo:MTP2}
In the discrete setting, a known sufficient condition is \emph{multivariate
total positivity of order 2} ($\mathrm{MTP}_2$) which is a type of
log-supermodularity condition on the density function. Fallat et~al.~\cite{MTP2}
show that $\mathrm{MTP}_2$ implies upward stability (i.e., $\CI{I:J|K} \implies
\CI{I:J|L}$ for any $L \supseteq K$), which is far stronger than Composition.
\end{example}

\begin{example} \label{ex:Compo:PartialOrtho}
\emph{Partial orthogonality} is a conditional independence-like relation
on vector configurations in a Hilbert space which satisfies the Composition
property. We refer to \cite{CompoML} for the precise definition and further
references. If the underlying set of vectors is linearly independent, also
Intersection holds. Partial orthogonality is used in machine learning as a
measure of semantic independence. In this setting, the ambient dimension
is usually much smaller than the number of vectors of interest, so
Intersection may not hold but Composition~does.
\end{example}

\begin{example}[Non-negative matrix factorization] \label{ex:Compo:MatrixFactor}
Any system of random variables $(\SL X, \SL Y)$ may be extended by $\SL A$
so that $\CI{\SL X:\SL Y|\SL A}$ holds. Following Reichenbach~\cite[Section~19]{Reichenbach}
such a variable $\SL A$ is called a \emph{common cause} (in particular if
$\SL X$ and $\SL Y$ are dependent). For this triple, a simple calculation
with the semigraphoid properties shows that all three instances of Composition
will hold.
A trivial choice of common cause is $\SL A = (\SL X, \SL Y)$ but there are
many possible choices: one for each \emph{non-negative matrix factorization}
of the joint probability table~$P$ of $\SL X$ and $\SL Y$. Indeed, write
\[
  P = \sum_{i=1}^k \gamma_i P^{(i)},
\]
where $P^{(i)}$ are non-negative matrices of rank one with $\lVert
P^{(i)}\rVert_1 = 1$, and $\gamma_i > 0$. Then the joint distribution
\[
  \Pr[\SL A = i, \SL X = x, \SL Y = y] = \gamma_i P^{(i)}_{xy}
\]
satisfies $\CI{\SL X:\SL Y|\SL A}$ and has the same $(\SL X, \SL Y)$-marginal
distribution as before. If the number of states $k$ is minimal, $\SL A$
corresponds to a non-negative rank factorization \cite{KaieFactor}. If~the
factorization is instead chosen for minimal entropy of the common cause,
we recover the \emph{common entropy} extension of $\SL X$ and $\SL Y$~\cite{ExactCommon}.
\end{example}

We now switch to examples of the failure of Composition.

\begin{example}[Three binary random variables] \label{ex:Compo:Binary}
The assumptions $\CI{\SL A:\SL X}$ and $\CI{\SL A:\SL Y}$ define a
\emph{marginal independence model} which can be easily parametrized
using the results of \cite{Kirkup}. For binary states, this
parametrization is as follows:
\begin{equation}
\label{eq:Compo:Binary}
\begin{aligned}
p_{000} &= \alpha \beta \gamma - \delta, &\qquad p_{100} &= \ol\alpha \beta \gamma - \eps, \\
p_{001} &= \alpha \beta \ol\gamma + \delta, &\qquad p_{101} &= \ol\alpha \beta \ol\gamma + \eps, \\
p_{010} &= \alpha \ol\beta \gamma + \delta, &\qquad p_{110} &= \ol\alpha \ol\beta \gamma + \eps, \\
p_{011} &= \alpha \ol\beta \ol\gamma - \delta, &\qquad p_{111} &= \ol\alpha \ol\beta \ol\gamma - \eps,
\end{aligned}
\end{equation}
where $\alpha, \beta, \gamma \in (0,1)$ and $\ol x = 1-x$;
the values of $\delta$ and $\eps$ are subject to the conditions that
all these probabilities must be non-negative. If $\alpha = \beta =
\gamma = \sfrac12$, $\delta = 0$ and $\eps > 0$ is small, then the
parametrization defines a probability distribution which satisfies
$\CI{\SL A:\SL X}$ and $\CI{\SL A:\SL Y}$ but the mutual information
$I(\SL A : \SL X, \SL Y) = 8 \eps^2 + \mathcal{O}(\eps^3)$ is positive.
Hence, this distribution violates Composition.
On the other hand, the parametrization technique from \cite{MarginalIndependence}
can also be used to describe the distributions on which $\CI{\SL A:\SL X,\SL Y}$
holds true. It is the codimension $1$ submodel parametrized by~\eqref{eq:Compo:Binary}
with $\eps = \delta \cdot \sfrac{\ol\alpha}{\alpha}$.
\end{example}

\begin{example}[Simple~matroids] \label{ex:Compo:Matroids}
Dually to \Cref{ex:Inter:Matroids}, we can consider the class of simple
matroids whose CI~structure satisfies Composition. Since there are no
loops or parallel elements in a simple matroid $(N, r)$, it satisfies
$\CI{i:j}$ for all distinct $i, j \in N$. Using Composition inductively
shows that $\CI{I:J}$ holds for all disjoint $I, J \subseteq N$.
It~follows that every element is a coloop and thus $(N, r)$ is the
free matroid.
\end{example}

\begin{example}[Linear spaces] \label{ex:Compo:Linear}
The argument from \Cref{ex:Compo:Matroids} relies on the rigid structure
of simple matroids, namely that all sets of $\text{size} \le 2$ must be
independent. For polymatroids this is no longer required as the rank of
a set $I$ is allowed to exceed its cardinality. Indeed, let $\mathbb{F}$
be any field and consider the vector space~$\SL V = \mathbb{F}^n$.
Take two distinct subspaces $\SL X, \SL Y$ which intersect non-trivially.
Then choose any subspace $\SL A$ such that $\SL A \cap (\SL X + \SL Y) =
\Set{0}$. The resulting subspace arrangement gives rise to an integer-valued
polymatroid (assigning to each collection of subspaces the dimension of
their span) satisfying $\CI{\SL A:\SL X, \SL Y}$ and $\CI[\mathrel{\centernot\CIperp}]{\SL X:\SL Y}$.
Hence, Composition is satisfied without the CI~structure being trivial.
\end{example}

\Cref{ex:Compo:Binary} shows that there are strictly positive distributions
which do not satisfy the Composition property. Thus, Composition does not
admit sufficient conditions which require a ``richness of support'' like
\Cref{cor:Inter:GK} in the case of Intersection.
Following Cartwright, Engström and Fink, one might hope that a sufficient
condition may still be hidden in the primary decomposition of the Composition~ideal,
even if it does not take the form of support constraints. Recall that the
Intersection ideal has one minimal prime for each admissible bipartite graph
on $Q_{\SL X} \sqcup Q_{\SL Y}$. This rich structure invites further investigation
which leads to \Cref{cor:Inter:GK}.
However, Kirkup~\cite{Kirkup} proved that the Composition ideal has only one
minimal prime whose variety contains any probability distribution at all.
Thus, there is no relevant structure in the primary decomposition and this
approach is also a dead end.

\subsection{The interaction information criterion}

A clearer line of attack is suggested by the information diagram
\Cref{fig:Assump}. Since the mutual information $I(\SL A:\SL X) = 0$,
it follows that $f = I(\SL A:\SL X\mid \SL Y) = {-I(\SL A:\SL X:\SL Y)}$.
In~particular, whenever the premises of Composition are satisfied,
the \emph{interaction information} (also known as \emph{triple mutual
information}) is non-positive. To obtain the conclusion, it must be
exactly zero. Hence, we have the following criterion.

\begin{lemma}[Interaction information criterion] \label{lemma:Compo:Interaction}
If $I(\SL A:\SL X:\SL Y) \ge 0$ then Composition holds: $\CI{\SL A:\SL X} \land \CI{\SL A:\SL
Y} \implies \CI{\SL A:\SL X, \SL Y}$.
Analogously, if $I(\SL A:\SL X:\SL Y) \le 0$ then we get Intersection:
$\CI{\SL A:\SL X|\SL Y} \land \CI{\SL A:\SL Y|\SL X} \implies \CI{\SL A:\SL X, \SL Y}$.
\end{lemma}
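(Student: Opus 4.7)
The plan is to read off the standard information-diagram identities expressing $I(\SL A:\SL X,\SL Y)$ in terms of the interaction information $I(\SL A:\SL X:\SL Y)$, and to observe that in each case the premises already pin $I(\SL A:\SL X:\SL Y)$ to one side of zero while the hypothesis pins it to the other, so together they force it to vanish and the conclusion drops out.

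The identities I would invoke are all visible in Figure~\ref{fig:InfoDiagram}:
\[
  I(\SL A:\SL X,\SL Y) \;=\; I(\SL A:\SL X\mid\SL Y) + I(\SL A:\SL Y\mid\SL X) + I(\SL A:\SL X:\SL Y),
\]
together with $I(\SL A:\SL X) = I(\SL A:\SL X\mid\SL Y) + I(\SL A:\SL X:\SL Y)$ and the analogous decomposition of $I(\SL A:\SL Y)$. These are immediate from the chain rule (or from reading off the atoms of the diagram).

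For the Composition case, the premises $I(\SL A:\SL X) = I(\SL A:\SL Y) = 0$ combined with the non-negativity of conditional mutual information yield $I(\SL A:\SL X\mid\SL Y) = I(\SL A:\SL Y\mid\SL X) = -I(\SL A:\SL X:\SL Y) \ge 0$, so the interaction information is automatically non-positive. The hypothesis $I(\SL A:\SL X:\SL Y) \ge 0$ then forces $I(\SL A:\SL X:\SL Y) = 0$, which in turn kills both conditional terms, and the first display collapses to $I(\SL A:\SL X,\SL Y) = 0$, i.e.\ $\CI{\SL A:\SL X,\SL Y}$. The Intersection case is mirror-symmetric: the premises kill the two conditional terms instead, so $I(\SL A:\SL X:\SL Y) = I(\SL A:\SL X) \ge 0$ automatically and the first display collapses to $I(\SL A:\SL X,\SL Y) = I(\SL A:\SL X:\SL Y)$, which vanishes once the interaction information is forced to the opposite side of zero, consistent with the sign convention $h = \pm I(\SL A:\SL X:\SL Y)$ used in Figure~\ref{fig:Assump}.

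There is no genuine obstacle: the lemma reduces to bookkeeping on the information diagram plus the non-negativity of Shannon conditional mutual information. The only care needed is matching the signs of $h$ correctly between the two cases, which is precisely why Figure~\ref{fig:Assump} is included.
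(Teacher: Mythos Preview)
Your argument is correct and is exactly the paper's own reasoning: the justification appears in the paragraph immediately preceding the lemma together with Figure~\ref{fig:Assump}, and you have merely written out the information-diagram bookkeeping explicitly. One point worth making explicit: for the Intersection half you rightly observe that the premises already force $I(\SL A:\SL X:\SL Y)=I(\SL A:\SL X)\ge 0$, so the hypothesis that actually does work is the \emph{opposite} inequality $I(\SL A:\SL X:\SL Y)\le 0$ --- the lemma as printed repeats ``$\ge 0$'' in both clauses, which is a slip (the stated condition is vacuous under the Intersection premises and cannot yield the conclusion), and your phrase ``forced to the opposite side of zero'' is precisely the right correction.
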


The sign of the interaction information $I(\SL A:\SL X:\SL Y)$ governs
monotonicity of the mutual information of any two variables under conditioning
on the third one:
\begin{align*}
  I(\SL A:\SL X:\SL Y)
  &= I(\SL A:\SL X) - I(\SL A:\SL X\mid \SL Y) \\
  &= I(\SL A:\SL Y) - I(\SL A:\SL Y\mid \SL X) \\
  &= I(\SL X:\SL Y) - I(\SL X:\SL Y\mid \SL A).
\end{align*}
Thus, its non-negativity yields the stronger implications
$\CI{i:j} \implies \CI{i:j|k}$.

Andrei Romashchenko kindly pointed out the following justification for why
this criterion is worth stating. Consider an interactive protocol where
Alice and Bob receive (possibly correlated) random seeds $\SL X$ and~$\SL Y$,
respectively. They exchange finitely many messages $\SL M_1, \SL M_2, \dots$
which are aggregated into a transcript.
Let $\SL A_k = (\SL M_1, \dots, \SL M_k)$ be the partial transcript at any
point in the communication. The next message $\SL M_{k+1}$ is sent by Alice,
say, after some local computations based only on her private data~$\SL X$
and the transcript~$\SL A_k$ containing all the information Bob has so far
revealed about his data~$\SL Y$. 
\begin{figure}
\begin{center}
\scalebox{0.7}{%
\begin{tikzpicture}[scale=0.5]
\node (a) at (0,0) {};
\node (b) at (4,0) {};
\node (c) at (2,3.464) {};
\draw (a) circle (3.7);
\draw (b) circle (3.7);
\draw (c) circle (3.7);

\node[shift={(0,0.9)}] (A|XY) at (c) {\Large$0$};
\node[shift={(-0.9,-0.4)}] (X|AY) at (a) {\Large$*$};
\node[shift={(0.9,-0.4)}] (Y|AX)  at (b)  {\Large$*$};
\node (A:X:Y) at (2,1.1) {\Large$h$};
\node (X:Y|A) at (2,-1.2) {\Large$*$};
\node (A:X|Y) at (-0.1,2.4) {\Large$*$};
\node (A:Y|X) at (4.1,2.4) {\Large$0$};

\node (A) at (2,8) {\Large$\SL M_{k+1}\mid \SL A_k$};
\node (X) at (-4.5,-3) {\Large$\SL X\mid \SL A_k$};
\node (Y) at (8.5,-3) {\Large$\SL Y\mid \SL A_k$};
\end{tikzpicture}}
\end{center}
\caption{Information diagram of Alice, Bob and the next message in an
interactive protocol, conditional the transcript of their communication.}
\label{fig:Interactive}
\end{figure}
This~implies~that ${H(\SL M_{k+1}\mid \SL X, \SL A_k)} = 0$
and it follows that $h = {I(\SL X:\SL Y:\SL M_{k+1} \mid \SL A_k)}
= {I(\SL M_{k+1}:\SL Y \mid \SL A_k)}$ is non-negative and hence
$I(\SL X:\SL Y \mid \SL A_k) \ge I(\SL X:\SL Y\mid \SL A_{k+1})$; see~\Cref{fig:Interactive}.
Inductively on the length of the transcript this leads to the well-known
conclusion that communication can only decrease mutual information.
In the communication complexity literature, this result is also stated
as: $\text{external information cost} \ge \text{internal information cost}$,
see~\cite{CompressCommunication}.
Hence, interactive two-party protocols give rise to joint distributions
in which the interaction information involving the transcript is
non-negative and hence Composition~holds.

The following support structure condition for jointly distributed
$\SL A, \SL X, \SL Y$ was studied by Kaced, Romashchenko and
Vereshchagin in \cite{IneqComb}:
\begin{gather}
  \label{eq:KRV}
  \begin{aligned}
  &\text{for all $(x,y) \in Q_{\SL X} \times Q_{\SL Y}$ there is at most one value $a \in Q_{\SL A}$} \\
  &\text{such that ${\Pr[\SL A = a, \SL X = x]} > 0$ and ${\Pr[\SL A = a, \SL Y = y]} > 0$.}
  \end{aligned}
\end{gather}
The assumption \eqref{eq:KRV} (the \emph{KRV condition}) ensures that $H(\SL A\mid \SL X)
+ H(\SL A\mid \SL Y) \le H(\SL A)$. From this it can easily be deduced
that $I(\SL A:\SL X:\SL Y) \ge H(\SL A\mid \SL X, \SL Y) \ge 0$.
In fact, the KRV condition also forces $H(\SL A\mid \SL X, \SL Y) = 0$
in the previous chain of inequalities.
Combining this with \Cref{lemma:Compo:Interaction} yields:

\begin{theorem}[{\cite[Theorem~1]{IneqComb}}]
If $\SL A, \SL X, \SL Y$ satisfy the KRV condition~\eqref{eq:KRV}
then ${H(\SL A\mid \SL X)} + {H(\SL A\mid \SL Y)} \le H(\SL A)$ and hence
$\CI{\SL A:\SL X} \land \CI{\SL A:\SL Y} \implies \CI{\SL A:\SL X, \SL Y}$.
\end{theorem}

\subsection{The dual conditional Ingleton criterion}

The Composition property can also be characterized synthetically using
CI~implications derived from conditional Ingleton inequalities as in
\Cref{sec:Inter:CondIng}.

\begin{theorem}[Dual conditional Ingleton criterion] \label{thm:Compo:CondIng}
Let $\SL A, \SL X, \SL Y, \SL G$ be jointly distributed discrete random
variables satisfying any of the following conditions:
\begin{enumerate}[label=(\roman*), itemsep=0.2em, leftmargin=4em]
\item\label{thm:Compo:CondIng:X}$\CI{\SL A:\SL G|\SL X,\SL Y}$ and $\CI{\SL X:\SL Y|\SL A}$,
\item\label{thm:Compo:CondIng:Y}$\CI{\SL X:\SL G|\SL A,\SL Y}$ and $\CI{\SL A:\SL Y|\SL X}$, or
\item\label{thm:Compo:CondIng:Z}$\CI{\SL Y:\SL G|\SL A,\SL X}$ and $\CI{\SL A:\SL X|\SL Y}$.
\end{enumerate}
Then $\CI{\SL A:\SL X|\SL G} \land \CI{\SL A:\SL Y|\SL G} \implies
\CI{\SL A:\SL X, \SL Y|\SL G}$ holds.
\end{theorem}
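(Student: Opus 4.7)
The plan is to mirror the proof of Theorem~\ref{thm:Inter:CondIng} via the catalogue of conditional Ingleton inequalities in~\cite{StudenyIngleton}, but now on the Composition side of the duality. First, I would recast the goal as a sign question for an interaction information: since the Composition premises $I(\SL A:\SL X|\SL G) = I(\SL A:\SL Y|\SL G) = 0$ force, by Lemma~\ref{lemma:Compo:Interaction} applied conditionally on $\SL G$, the bound $I(\SL A:\SL X:\SL Y|\SL G) \le 0$, the desired conclusion $\CI{\SL A:\SL X,\SL Y|\SL G}$ is equivalent to the reverse inequality $I(\SL A:\SL X:\SL Y|\SL G) \ge 0$. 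The task is therefore to establish this lower bound from each additional hypothesis.

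Next, I would exploit symmetry: swapping $\SL X \leftrightarrow \SL Y$ exchanges conditions~\ref{thm:Compo:CondIng:Y} and~\ref{thm:Compo:CondIng:Z}, leaving only two essentially distinct cases. For each one, I would invoke the Shannon-type inequality in~\cite{StudenyIngleton} whose CI premise matches the statement about $\SL G$ in our hypothesis, and whose body provides a lower bound on $I(\SL A:\SL X:\SL Y|\SL G)$ in terms of conditional mutual informations that vanish under the Composition premises together with the auxiliary CI statement about $\SL A, \SL X, \SL Y$ (namely $\CI{\SL X:\SL Y|\SL A}$ in case~\ref{thm:Compo:CondIng:X} and $\CI{\SL A:\SL Y|\SL X}$ in case~\ref{thm:Compo:CondIng:Y}). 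Combining this lower bound with the upper bound from Lemma~\ref{lemma:Compo:Interaction} pins $I(\SL A:\SL X:\SL Y|\SL G) = 0$ and hence gives $I(\SL A:\SL X|\SL Y,\SL G) = I(\SL A:\SL Y|\SL X,\SL G) = 0$, as required.

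A conceptual justification for expecting this to succeed is the formal duality with Theorem~\ref{thm:Inter:CondIng}: replacing in conditions (i)--(iii) there each statement $\CI{Z_1:Z_2|S}$ by its four-variable complement $\CI{Z_1:Z_2|(\{\SL A,\SL X,\SL Y,\SL G\}\setminus Z_1Z_2)\setminus S}$ recovers exactly the three conditions of the present theorem; the same involution carries rules (I:2) and (I:4) of~\cite{StudenyIngleton}, which were cited in the proof of Theorem~\ref{thm:Inter:CondIng}, into the rules needed here. The principal obstacle is therefore careful bookkeeping: locating in Studený's catalogue the dual counterparts of (I:2) and (I:4) with the correct side conditions, or else verifying directly via entropy algebra that the dualized CI premises imply the dualized Shannon inequality. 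Once the right rule is in hand, the deduction collapses to a short substitution of vanishing terms, and the conclusion follows as a Shannon equality.
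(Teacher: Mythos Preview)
Your plan is essentially the paper's proof: observe the $\SL X \leftrightarrow \SL Y$ symmetry between conditions~\ref{thm:Compo:CondIng:Y} and~\ref{thm:Compo:CondIng:Z}, then invoke the matching rules from Studený's catalogue; the paper simply names them as (I:14) for~\ref{thm:Compo:CondIng:Y}/\ref{thm:Compo:CondIng:Z} and (I:19) for~\ref{thm:Compo:CondIng:X}, which is the bookkeeping you left open. One caution on your heuristic: as the paper itself remarks in the Duality paragraph of \Cref{sec:Remarks}, dualizing a valid conditional information inequality does \emph{not} in general produce another valid one, so the duality with (I:2) and (I:4) is a correct prediction here but not itself a proof---you are right that the rules must still be located or verified directly.
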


\begin{proof}
Analogously to the proof of \Cref{thm:Inter:CondIng}, the conditions~\ref{thm:Compo:CondIng:Y}
and~\ref{thm:Compo:CondIng:Z} are symmetric and they follow from (I:14)
in \cite{StudenyIngleton}. Condition~\ref{thm:Compo:CondIng:X} follows
from~(I:19).
\end{proof}

\begin{example}
For any pair $\SL X, \SL Y$ one can construct via \Cref{ex:Compo:MatrixFactor}
a common cause~$\SL A$ such that $\CI{\SL X:\SL Y|\SL A}$ holds. Then take any
function $\SL G$ of $(\SL X, \SL Y)$. The resulting joint distribution has
$\CI{\SL A:\SL G|\SL X, \SL Y}$ and thus \Cref{thm:Compo:CondIng}~\ref{thm:Compo:CondIng:X}
applies.
\end{example}

Note that in \Cref{thm:Compo:CondIng}, the Composition property is obtained
conditional on the auxiliary~$\SL G$. Some care must be taken in interpreting
this result: It is not guaranteed that every conditional distribution
$(\SL A, \SL X, \SL Y \mid \SL G = g)$ satisfies the Composition property!
Rather, the theorem promises that if \emph{all} conditional distributions
satisfy the premises of Composition (i.e., $\CI{\SL A:\SL X|{\SL G=g}}$ and
$\CI{\SL A:\SL Y|{\SL G=g}}$ for all $g \in Q_{\SL G}$), then they also
satisfy the conclusion. However, if one of the conditional distributions
does not satisfy the premises, the theorem allows another conditional
distribution to satisfy the premises without the conclusion.
This makes the criterion appear to be somewhat harder to work with as it
requires a suitable coupling of the conditional distributions through~$\SL G$.

\begin{remark}
Every compositional graphoid satisfying $\CI{\SL A:\SL X|\SL G}$,
$\CI{\SL A:\SL Y|\SL G}$ and \Cref{thm:Compo:CondIng}~\ref{thm:Compo:CondIng:X}
must even have $\CI{\SL A:\SL X, \SL Y, \SL G}$ and $\CI{\SL X:\SL Y}$.
\end{remark}

\section{Remarks}
\label{sec:Remarks}

\paragraph{Duality.}

Intersection and Composition are not only converses modulo the
semigraphoid axioms but also \emph{dual}. For an elementary
CI~statement $\CI{i:j|L}$ over ground set $N$, the \emph{dual
statement} is $\CI{i:j|L}^* \defas \CI{i:j|N \setminus ijL}$.
Applying duality statement-wise transforms
\begin{alignat*}{2}
  \text{\bfseries Intersection\hphantom{${}^*$}}\quad && \CI{i:j|kL}  \land \CI{i:k|jL}  &\implies \CI{i:j|L} \land \CI{i:k|L} \; \text{ into}\\
  \text{\bfseries Intersection${}^*$}\quad            && \CI{i:j|\PL} \land \CI{i:k|\PL} &\implies \CI{i:j|k\PL} \land \CI{i:k|j\PL},
\end{alignat*}
where $\PL = N \setminus ijkL$. But this is the Composition property with
$L$ replaced by~$\PL$. Hence, the dual of a CI~structure satisfying
Intersection is a CI~structure satisfying Composition and vice versa.
Remarkably, the sets of sufficient conditions obtained in
\Cref{thm:Inter:CondIng,thm:Compo:CondIng} are also formally dual to
each other. This is a feature of the conditional information inequalities
used in their proofs, although in general it is not true that any valid
conditional information inequality can be dualized and remain valid.

Denote by $\SR I_4$ the set of CI~structures which are representable
by four discrete random variables and satisfy all instances of Intersection;
analogously $\SR C_4$ for the Composition property. It can be verified that
$\lvert \SR I_4\vert = \lvert \SR C_4\rvert = 5\,736$ and they both have
the same number of elements modulo the action of the symmetric group $S_4$
on the random variables. These orbits are usually called \emph{permutational
types} and $\SR I_4$ and $\SR C_4$ both have $369$ of them. However, this
coincidence of numbers is \emph{not} explained by duality. For example,
the CI~structure of the distribution in \cite[Example~4]{StudenyIngleton},
is in $\SR I_4 \cap \SR C_4$ but its dual is not probabilistically
representable as it violates Studený's rule~(I:1).

Moreover, the sets $\SR I_4$ and $\SR C_4$ have a natural lattice structure.
Using code adapted from \cite{Selfadhe}, we have computed that $\SR I_4$
has $23$ permutational types of irreducible elements and that $\SR C_4$ has
$24$ such permutational types. Hence, the lattices are not isomorphic.
In view of this incompatibility, we believe that the coincidence of the
cardinalities of $\SR I_4$ and $\SR C_4$ is an artifact of the small ground
set size rather than a reflection of a deeper connection between the
two properties.

\paragraph{Operational characterizations of \Cref{thm:Inter:CondIng,thm:Compo:CondIng}.}

One of the merits of the Gács--Körner criterion for Intersection is that the
auxiliary variable~$\SL{GK}$ can be directly constructed and has an operational
interpretation as the common information of $\SL X$ and~$\SL Y$.
Both of these aspects have to be left unexplored in this article for the
auxiliary variables appearing in \Cref{thm:Inter:CondIng,thm:Compo:CondIng}.
It would be interesting to attach an operational meaning to these random
variables or to provide direct constructions, even in special cases.

\paragraph{The number of probabilistic compositional graphoids.}

The study of Intersection and Composition is fueled by applications in
graphical modeling. Amini, Aragam and Zhou \cite{Bryon} have recently
generalized Markov boundary techniques for the compact encoding of
CI~structures away from a concrete graphical representation relying
only on structural properties of compositional graphoids. This begs
the question, which was originally posed to me by Bryon Aragam in
private communication, whether there are significantly more statistical
models satisfying Intersection and Composition than there are
graphical~models.

Studený argues in \cite[Section~3.6]{Studeny} that any type of graph
with a fixed number of edge types (and no hyperedges) can produce at
most $2^{O(p(n))}$ distinct CI~structures on $n$ nodes, where $p$ is
some polynomial. He also proves that there are $2^{2^{\Omega(n)}}$
CI~structures which are \emph{probabilistic} (i.e., representable by
discrete random variables) and concludes that graphical models cannot
hope to capture all the nuances of probabilistic CI~structures.
We suggest a similar approach:

\begin{conjecture}
The number of compositional graphoids which are representable by $n$
discrete random variables is asymptotically $2^{2^{\Omega(n^\eps)}}$
for some $\eps > 0$ (or even $\eps > 1$).
\end{conjecture}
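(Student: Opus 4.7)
The plan is to adapt Studený's construction~\cite[Section~3.6]{Studeny} that produces $2^{2^{\Omega(n)}}$ probabilistic CI structures, refining it so that the witnessing distributions additionally satisfy Intersection and Composition. The central difficulty is that Studený's distributions live near the boundary of the probability simplex, where the kinds of support degeneracies surveyed in \Cref{ex:Inter:FD,ex:Compo:Binary} typically destroy Composition (and often Intersection).

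First, I would fix a class $\SR D$ of discrete distributions which automatically lies inside the compositional graphoid class. Three natural candidates are: strictly positive $\mathrm{MTP}_2$ densities, which yield Intersection via strict positivity and Composition via log-supermodularity (\Cref{ex:Compo:MTP2}); the common-cause constructions of \Cref{ex:Compo:MatrixFactor}, whose CI behaviour is controlled by the chosen non-negative factorisation; and distributions meeting the premises of the dual conditional Ingleton criterion \Cref{thm:Compo:CondIng}. Second, inside $\SR D$ I would design a ``gadget'' on $k$ variables realising $2^{2^{\Omega(k)}}$ distinct CI structures, for instance by perturbing a single base distribution along many essentially independent directions that toggle individual elementary CI statements on or off without leaving $\SR D$. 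Finally, since tensoring with mutually independent random variables preserves both Intersection and Composition (each new CI statement introduced is implied by the joint independence of the blocks), padding the gadget by $n-k$ independent extra variables yields a compositional graphoid on $n$ variables whose CI structure is determined by the gadget's choice; setting $k \sim n^\eps$ then produces $2^{2^{\Omega(n^\eps)}}$ distinct structures after reducing by the symmetric group action if desired.

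The main obstacle is the gadget itself: Studený's doubly exponential richness comes from combinatorial objects (near-matroidal support patterns, bipartite graphs of functional dependence in the spirit of \Cref{ex:Inter:Binary}) whose complexity is hard to reconcile with Composition, so any gadget living inside $\SR D$ is likely forced to carry strictly less information per variable than Studený's free construction. An alternative would be a counting argument showing that the fraction of Studený's distributions which \emph{do} satisfy Composition is itself doubly exponential, so that the lower bound survives the restriction; but localising Composition's failure modes inside his family appears at least as delicate as engineering a gadget from scratch. The stronger variant with $\eps > 1$, which would assert that compositional graphoids are essentially as abundant as probabilistic CI structures, seems well beyond the reach of a simple block construction and would likely demand genuinely new ideas about how Composition constrains discrete distributions.
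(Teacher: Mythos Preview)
The statement you are attempting to prove is explicitly labeled a \emph{conjecture} in the paper, and the paper offers no proof; it is posed as an open problem. There is therefore no ``paper's own proof'' to compare against.

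The paper does, however, anticipate your approach and flag its central weakness. Immediately after stating the conjecture, the author remarks that Studen\'y's doubly exponential construction ``implicitly relies on cycle matroids'' and that by \Cref{ex:Inter:Matroids,ex:Compo:Matroids} ``these examples will not in large enough numbers satisfy Intersection or Composition.'' So the very family you propose to adapt is singled out as unlikely to survive the compositional-graphoid restriction. Your proposal already concedes this in its paragraph on obstacles, so what you have written is an accurate description of the difficulty rather than a path around it.

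None of your three candidate classes $\SR D$ comes with an argument that it supports a doubly exponential gadget. For $\mathrm{MTP}_2$, upward stability severely constrains the shape of the CI structure and you offer no lower bound on the number of realizable structures. The common-cause construction of \Cref{ex:Compo:MatrixFactor} guarantees only a single instance of Composition, not the full property on $n$ variables. The dual conditional Ingleton criterion of \Cref{thm:Compo:CondIng} is likewise a local four-variable condition and does not by itself force the global property. Your padding-by-independent-variables step is sound, but it transfers the entire burden to the gadget, for which no candidate construction is offered. In short, your proposal restates the open problem and some of the obstructions the paper already records; it does not advance beyond them.
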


Studený's construction of many probabilistic CI~structures implicitly
relies on cycle matroids. By \Cref{ex:Inter:Matroids,ex:Compo:Matroids}
these examples will not in large enough numbers satisfy Intersection
or Composition.
Note that the number of compositional graphoids is known to grow with
$2^{\Omega(n 2^n)}$ but the construction in \cite{ConstructionMethods}
is purely combinatorial and has no associated random variables.

\paragraph{Relation to Gaussianity.}

Regular Gaussian distributions satisfy both, Intersection and Composition.
For the third implication $\CI{\SL A:\SL X|} \land \CI{\SL A:\SL X|\SL Y}
\implies \CI{\SL A:\SL X, \SL Y}$ briefly discussed in \Cref{sec:Intro},
note that the premises are symmetric under exchanging $\SL A$ and $\SL X$,
but the consequence is not. This means that $\CI{\SL A:\SL X}$ and
$\CI{\SL A:\SL X|\SL Y}$ may be derived from $\CI{\SL A:\SL X, \SL Y}$
as well as from $\CI{\SL X:\SL A, \SL Y}$ using the semigraphoid axioms.
A more symmetric formulation of the converse implication
\[
  \CI{\SL A:\SL X} \land \CI{\SL A:\SL X|\SL Y} \implies \CI{\SL A:\SL X, \SL Y} \lor \CI{\SL X:\SL A, \SL Y}
\]
is sometimes referred to as \emph{Weak transitivity} and is known to hold
for Gaussians~as~well.
This analysis suggests that the realm of Gaussian random variables is
usually more pleasant to work in as far as elementary properties of
conditional independence, such as the semigraphoid properties and their
converses, are concerned.

\paragraph{The third implication.}

The proper (unsymmetrized) form of the third converse implication
has been considered in the work of \cite{DawidOperations} which
features a sufficient condition derived from a generalization of
Basu's theorem. For discrete random variables, our approach of
using conditional information inequalities also applies.

\begin{theorem}
Let $\SL A, \SL X, \SL Y$ be jointly distributed discrete random
variables. If there exists a discrete $\SL G$ jointly distributed
with $\SL A, \SL X, \SL Y$ satisfying $\CI{\SL A:\SL Y|\SL G}$ and
$\CI{\SL Y:\SL G|\SL X}$, then $\CI{\SL A:\SL X} \land \CI{\SL A:\SL X|\SL Y}
\implies \CI{\SL A:\SL X, \SL Y}$ holds.
\end{theorem}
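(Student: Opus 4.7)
The plan is to follow the same two-step template used in the proofs of \Cref{thm:Inter:CondIng,thm:Compo:CondIng}: reduce the target to a single elementary CI implication on four random variables, and then invoke a matching rule from Studený's catalogue of conditional Ingleton inequalities \cite{StudenyIngleton}.

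First I would apply the semigraphoid decomposition $\CI{\SL A:\SL X, \SL Y} \iff \CI{\SL A:\SL X} \land \CI{\SL A:\SL Y|\SL X}$. Since $\CI{\SL A:\SL X}$ is already a premise, the theorem reduces to the five-CI implication
\[
  \CI{\SL A:\SL X} \land \CI{\SL A:\SL X|\SL Y} \land \CI{\SL A:\SL Y|\SL G} \land \CI{\SL Y:\SL G|\SL X} \implies \CI{\SL A:\SL Y|\SL X}
\]
on the four random variables $\SL A, \SL X, \SL Y, \SL G$. In the information diagram of $(\SL A, \SL X, \SL Y)$, the first two premises already force $I(\SL A:\SL X:\SL Y) = 0$, so $I(\SL A:\SL Y|\SL X)$ coincides with $I(\SL A:\SL Y)$, and the role of $\SL G$ is to push this single non-negative quantity down to zero.

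Next I would match this implication, up to a permutation of the four roles, against a specific rule (I:$k$) in the catalogue of \cite{StudenyIngleton} and conclude by a direct citation, exactly as in \Cref{thm:Inter:CondIng,thm:Compo:CondIng}. The main obstacle I expect is precisely this identification step: each entry of Studený's catalogue is an independent conditional information inequality whose validity rests on its own support-based argument, and finding the rule whose premise pattern contains $\CI{\SL A:\SL Y|\SL G} \land \CI{\SL Y:\SL G|\SL X}$ (and, given $\CI{\SL A:\SL X}$ and $\CI{\SL A:\SL X|\SL Y}$, outputs $\CI{\SL A:\SL Y|\SL X}$) requires careful bookkeeping of symmetries. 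Should no entry apply verbatim, the fallback is to establish the required conditional Ingleton-type bound ab initio, mimicking the combinatorial arguments in \cite{StudenyIngleton}.
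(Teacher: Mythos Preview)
Your plan is exactly the paper's approach: the proof in the paper is a one-line citation of rule~(I:7) from \cite{StudenyIngleton}, so the identification step you flag as the main obstacle is resolved by that specific entry. Your semigraphoid reduction to $\CI{\SL A:\SL Y|\SL X}$ and the information-diagram remark are correct and slightly more explicit than the paper, but the substance is the same.
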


\begin{proof}
This follows from (I:7) in \cite{StudenyIngleton}.
\end{proof}

\paragraph{Conditional information inequalities for Composition.}

Matúš~\cite{MatusPiecewise} proved the following piecewise linear
conditional information inequality:
\begin{gather}
  \label{eq:MatusPiecewise}
  \begin{gathered}
  \Big[ H(\SL A\mid \SL X, \SL Y) = H(\SL X \mid \SL A, \SL Y) = H(\SL Y \mid \SL A, \SL X)
  = I(\SL A:\SL X) = I(\SL A:\SL Y) = 0 \Big] \\
  {} \;\implies\; \Big[ H(\SL X) = H(\SL Y) \ge \log\left\lceil\exp H(\SL A)\right\rceil \Big].
  \end{gathered}
\end{gather}
The conclusion in fact characterizes completely the entropy profiles satisfying
the premises. The proof crucially relies on the \emph{tightness assumptions}
that each variable is a function of the remaining two; cf.~\cite{MatusCsirmaz}.
Under these assumptions, Composition holds non-vacuously if and only if
$H(\SL A) = H(\SL X) = H(\SL Y) = 0$ which is not a useful condition.
One~may ask if there is a version of this inequality in which the tightness
constraints are lifted. More~precisely, if $H(\SL A\mid \SL X, \SL Y)$, say,
is a very small but positive number $\eps$, does the inequality hold at least
approximately with a perturbation of order $\eps$? Such a generalization might
help in deriving a new sufficient condition for Composition. Unfortunately,
it does not exist. We show that the tightness assumptions are \emph{essential}
in the sense of~\cite{CondInfo}.

\begin{figure}
\begin{center}
\scalebox{0.7}{%
\begin{tikzpicture}[scale=0.5]
\node (a) at (0,0) {};
\node (b) at (4,0) {};
\node (c) at (2,3.464) {};
\draw (a) circle (3.7);
\draw (b) circle (3.7);
\draw (c) circle (3.7);

\node[shift={(0,0.9)}] (A|XY) at (c) {\Large$\eps$};
\node[shift={(-0.9,-0.4)}] (X|AY) at (a) {\Large$0$};
\node[shift={(0.9,-0.4)}] (Y|AX)  at (b)  {\Large$0$};
\node (A:X:Y) at (2,1.1) {\Large$0$};
\node (X:Y|A) at (2,-1.2) {\Large$\alpha$};
\node (A:X|Y) at (-0.1,2.4) {\Large$0$};
\node (A:Y|X) at (4.1,2.4) {\Large$0$};

\node (A) at (2,8) {\Large$\SL A$};
\node (X) at (-4,-3) {\Large$\SL X$};
\node (Y) at (8,-3) {\Large$\SL Y$};
\end{tikzpicture}}
\end{center}
\caption{An almost-tight distribution with a large violation
of~\eqref{eq:MatusPiecewise}.}
\label{fig:Tight}
\end{figure}

Fix any two positive constants $\alpha, \eps$. There exists a random variable
$\SL X$ of entropy~$\alpha$; join to it an identical copy $\SL Y$ such that
$\Pr[\SL X = \SL Y] = 1$. Then take a random variable $\SL A$ of entropy~$\eps$
and join it to the pair $\SL X, \SL Y$ independently. The resulting joint
distribution has the information diagram depicted in \Cref{fig:Tight}.
As~$\eps \to 0$, the variable $\SL A$ becomes constant and the entropy
profile in the limit satisfies the conditions~of~\eqref{eq:MatusPiecewise}. \linebreak[4] 
For~each~fixed $\lambda > 0$ there exists a distribution in this family
which satisfies ${H(\SL X \mid \SL A, \SL Y)} = {H(\SL Y \mid \SL A, \SL X)}
= {I(\SL A:\SL X)} = {I(\SL A:\SL Y)} = 0$ and $H(\SL X) = H(\SL Y) =
\alpha$ but also
\[
  \alpha + \lambda H(\SL A\mid \SL X, \SL Y) = \alpha + \lambda \eps \ll
  \log(2) \le \log\left\lceil\exp \eps\right\rceil = \log\left\lceil\exp H(\SL A)\right\rceil.
\]
Therefore, tightness is essential in \eqref{eq:MatusPiecewise}.
It remains an interesting open problem to characterize the entropy
profiles on three random variables satisfying only $I(\SL A:\SL X)
= I(\SL A:\SL Y) = 0$. This characterization would encompass every
sufficient and necessary condition for Composition which can be
formulated in terms of information quantities.

\paragraph{Sufficient conditions on $\SL X$ and $\SL Y$ alone.}

A striking feature of the Gács--Körner condition in \Cref{thm:Inter:GK}
is that it pertains to the distribution of the pair~$(\SL X, \SL Y)$.
If~$G(\SL X, \SL Y)$ is connected then it is impossible to construct a
third random variable~$\SL A$ such that Intersection is violated.
This leads to an interesting tangential question:

\begin{question}
For which $\SL X$ and $\SL Y$ is it possible to construct $\SL A$
such that $I(\SL A:\SL X) = I(\SL A:\SL Y) = 0$ but $I(\SL A:\SL X, \SL Y) > 0$?
\end{question}

\begin{figure}
\begin{center}
\scalebox{0.7}{%
\begin{tikzpicture}[scale=0.5]
\node (a) at (0,0) {};
\node (b) at (4,0) {};
\node (c) at (2,3.464) {};
\draw (a) circle (3.7);
\draw (b) circle (3.7);
\draw (c) circle (3.7);

\node[shift={(0,0.9)}] (A|XY) at (c) {\Large$0$};
\node[shift={(-0.9,-0.4)}] (X|AY) at (a) {\Large$0$};
\node[shift={(0.9,-0.4)}] (Y|AX)  at (b)  {\Large$0$};
\node (A:X:Y) at (1.85,1.2) {\Large$-g$};
\node (X:Y|A) at (2,-1.2) {\Large$g$};
\node (A:X|Y) at (-0.1,2.4) {\Large$x$};
\node (A:Y|X) at (4.1,2.4) {\Large$y$};

\node (A) at (2,8) {\Large$\SL A$};
\node (X) at (-4,-3) {\Large$\SL X$};
\node (Y) at (8,-3) {\Large$\SL Y$};
\end{tikzpicture}}
\end{center}
\caption{When $\SL A = \SL X + \SL Y$ for $\SL X$ and $\SL Y$ independent
and uniformly distributed in an abelian group then $x = y = g$.}
\label{fig:Group}
\end{figure}

\begin{example}
Suppose that $\SL X$ and $\SL Y$ are independent, uniformly distributed
on their respective support, and have the same entropy. Thus their supports
must have the same size and we may assume that they both range in the
finite group $(\BB Z/k\BB Z, +)$. The variable $\SL A = \SL X + \SL Y$
is also uniformly distributed in $\BB Z/k\BB Z$ and any two of the three
variables determine the third. Together with the independence of $\SL X$
and $\SL Y$ we get the situation in \Cref{fig:Group}.
From $H(\SL X) = H(\SL Y) = H(\SL A)$ we get that $x = y = x+y-g$
which implies $x = y = g$ and thus $I(\SL A:\SL X) = I(\SL A:\SL Y) = 0$.
Since $\SL A$ is a function of $\SL X, \SL Y$, we get a violation of
Composition from $I(\SL A:\SL X, \SL Y) = H(\SL A) = \log k$.
\end{example}

While our assumptions in the above construction are very strong, they are
all of ``special position'' type: independence, uniformity and equality
of entropies.  If a sufficient condition for Composition only in terms of
the $(\SL X, \SL Y)$-marginal exists then it cannot apply to these
distributions. It must, at least implicitly, either enforce dependence,
biased marginals or make assertions about the alphabets of $\SL X$ and
$\SL Y$.

\section*{Acknowledgements}

\setlength{\intextsep}{5pt}%
\setlength{\columnsep}{5pt}%
\begin{wrapfigure}{R}{0.18\linewidth}
\vspace{-.5\baselineskip}%
\centering%
\href{https://doi.org/10.3030/101110545}{%
\includegraphics[width=0.9\linewidth]{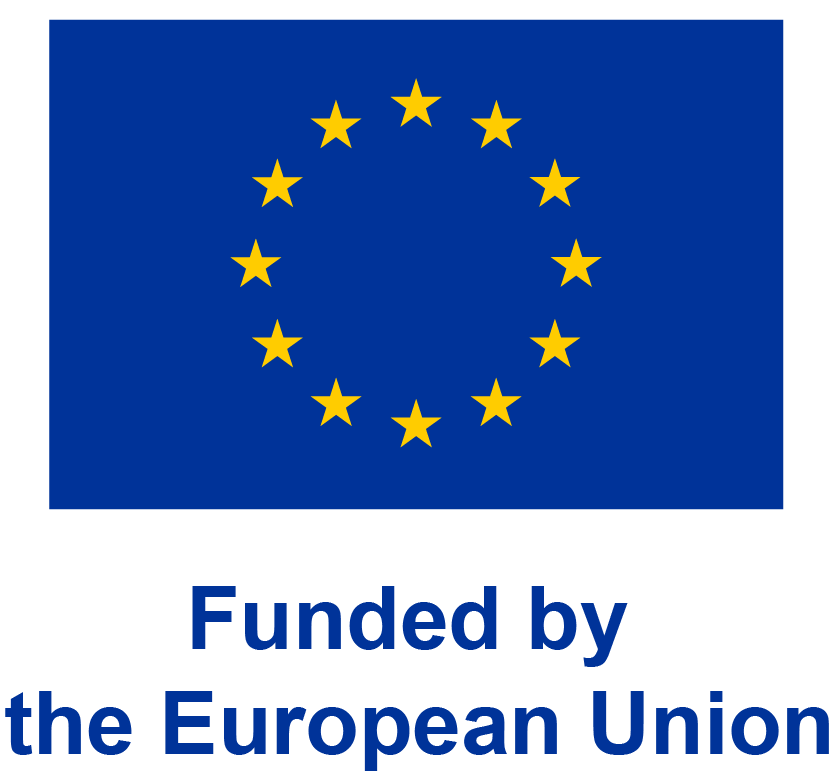}%
}
\end{wrapfigure}
I would like to thank Bryon Aragam and Andrei Romashchenko for enlightening
discussions around the Composition property. I also thank the anonymous
referees for their careful reading of the manuscript and suggestions for
improvement.
This research was funded by the European Union's Horizon 2020 research
and innovation programme under the Marie Skłodowska-Curie grant agreement
No.~101110545.


\makecontacts

\end{document}